\documentclass[conference]{IEEEtran}
\IEEEoverridecommandlockouts

\usepackage{cite}

\usepackage{amsfonts,amssymb,amsthm,mathrsfs,bm,bbm}
\usepackage{graphicx}
\usepackage{overpic}
\usepackage{textcomp}
\usepackage{xcolor}
\usepackage{enumerate}
\usepackage{algpseudocode}
\usepackage{amsmath}

\ifCLASSOPTIONcompsoc 
\usepackage[caption=false,font=normalsize,labelfon t=sf,textfont=sf]{subfig} 
\else 
\usepackage[caption=false,font=footnotesize]{subfig} \fi 

\newcommand{\eqdef}{:=}

\renewcommand{\vec}[1]{\bm{#1}}		
\newcommand{\real}{\stdset{R}} 		
\newcommand{\rvec}[1]{\mathbbm{#1}} 		
\newcommand{\E}{\mathsf{E}}		
\newcommand{\Var}{\mathsf{V}}			
\newcommand{\stdset}[1]{\mathbbmss{#1}}	
\newcommand{\set}[1]{\mathcal{#1}}		
\newcommand{\refeq}[1]{(\ref{#1})}		%

\newcommand{\herm}{\mathsf{H}}			

\newtheorem{lemma}{Lemma}
\newtheorem{definition}{Definition}

\newtheorem{proposition}{Proposition}
\newtheorem{assumption}{Assumption}
\newtheorem{remark}{Remark}
\newtheorem{Cor}{Corollary}

\newcommand{\signal}[1]{{\boldsymbol{#1}}}
\newcommand{\Natural}{{\mathbb N}}

\begin{document}
	
	\title{Characterization of the weak Pareto boundary of resource allocation problems in wireless networks -- Implications to cell-less systems
		\thanks{R.~L.~G.~Cavalcante, L.~Miretti, and S.~Sta\'nczak acknowledge the financial support by the Federal Ministry of Education and Research of Germany in the programme of “Souverän. Digital. Vernetzt.” Joint project 6G-RIC, project identification number: 16KISK020K and 16KISK030.}
	}
	
	\author{
		\IEEEauthorblockN{Renato L. G. Cavalcante\textdagger, Lorenzo Miretti\textdagger\textdaggerdbl, and Sławomir Sta\'nczak\textdagger\textdaggerdbl}
		\IEEEauthorblockA{\emph{\textdagger Fraunhofer Institute for Telecommunications, Heinrich-Hertz-Institut, Berlin, Germany} \\
			\emph{\textdaggerdbl Technical University of Berlin, Berlin, Germany}\\
			\{renato.cavalcante,lorenzo.miretti,slawomir.stanczak\}@hhi.fraunhofer.de}
	}
	
	\maketitle
	
	\begin{abstract} We establish necessary and sufficient conditions for a  network configuration to provide utilities that are both fair and efficient in a well-defined sense.  To cover as many applications as possible with a unified framework, we consider utilities defined in an axiomatic way, and the constraints imposed on the feasible network configurations are expressed with a single inequality involving a monotone norm. In this setting, we prove that a necessary and sufficient condition to obtain network configurations that are efficient in the weak Pareto sense is to select configurations attaining equality in the monotone norm constraint. Furthermore, for a given configuration satisfying this equality, we characterize a criterion for which the configuration can be considered fair for the active links. We illustrate potential implications of the theoretical findings by presenting, for the first time, a simple parametrization based on power vectors of achievable rate regions in modern cell-less systems subject to practical impairments.
	\end{abstract}
	
	\begin{IEEEkeywords}
		team MMSE, cell-free massive MIMO, power control, distributed beamforming, user-centric
	\end{IEEEkeywords}

	\section{Introduction}

	In typical formulations of resource allocation problems in wireless networks, each user has its own {\it utility function}, which is a mathematical description of the user's performance [e.g., the signal-to-interference-noise ratio (SINR) or achievable rate] for a given feasible network configuration (e.g., the choice of transmit power and beamformers). In most applications of practical interest, users compete for wireless resources, so the utilities are coupled because changes in the network configuration to improve the utility of one user in general degrades the utility of others. As a result, the utilities cannot be chosen independently, and resource allocation algorithms are inherently limited to choosing points in the so-called achievable (utility) region, which is the set of joint utilities that can be obtained with a valid network configuration \cite{emil2014multiobjective,bjornson2013resource,larsson2011pareto,shindoh2020}. 
	
	To select points in the achievable region, resource allocation algorithms typically solve  optimization problems providing utilities that are both {\it efficient} and {\it fair} in some well-defined sense. In particular, we consider in this study the notion of weak Pareto efficiency, which, in the wireless settings described later, corresponds to the set of utilities, hereafter called weak Pareto region or boundary, for which there is no valid network configuration that strictly increases the utilities of all users \cite{emil2014multiobjective}. This notion is widely used in multi-user information theory to define the boundary of capacity regions, and a standard result is that points on the weak Pareto boundary can be obtained by solving weighted max-min utility optimization problems. From this optimization viewpoint, the weights assigned to the utilities establish a criterion of fairness. For example, in standard power control problems with the rates of users considered as the utilities, we obtain with uniform weighting fair solutions in the sense that the worst rate among all users in the network is maximized, which is a specially useful criterion of fairness in systems aiming at providing uniformly good service to all users, as in the case of cell-less networks \cite{ngo2017cell,schubert2019,demir2021}. 
	
	In light of the above correspondence between points on the weak Pareto boundary and optimal utilities of weighted max-min optimization problems, a great deal of effort has been devoted to constructing simple solvers for the optimization problems (see, for instance, \cite{schubert2004multiuser,bashar2019uplink}). As it is well known in the mathematical literature, particular solutions to many optimization problems can be easily computed by solving fixed point problems for which remarkably simple iterative algorithms exist, and this fact has been widely exploited in the wireless domain \cite{demir2021,nuzman07,tan2014wireless,zheng2016,renatomaxmin,renato2016maxmin,renato2019,miretti22globecom}. From this viewpoint of fixed point theory, the weights defining the fairness criterion can be seen as parameters of a mapping that has as its fixed point a network configuration providing utilities at a specific point on the weak Pareto boundary. 
	
	Building upon these results on fixed point theory, we study the weak Pareto boundary and notions of fairness for a fairly large class of resource allocation problems. In more detail, we start by revisiting a general framework for resource allocation \cite{nuzman07,renato2019} that has been recently used to solve open problems in massive MIMO and cell-less systems \cite{miretti22vtc,miretti22globecom}. In the general setting under consideration, the utilities are described based on the axiomatic framework of standard interference mappings (see \cite{yates95,shindoh2020}, and Assumption~\ref{assumption.restrictions} introduced later), and the feasible network configurations are expressed as an inequality involving monotone norms, and we recall that constraints of this type can represent a fairly rich class of constraints widely used in wireless systems  \cite[Proposition~2]{renatomaxmin}\cite{nuzman07,renato2019,miretti22vtc}. This connection to monotone norms and the above-mentioned results on fixed point theory enable us to establish the following simple parametrization of the weak Pareto boundary: the utilities are on this boundary if and only if the corresponding network configuration achieves equality in the inequality constraint involving the monotone norm (see Corollary~\ref{cor.main}). In the process of proving this result, we also characterize a criterion for which an arbitrary configuration attaining this equality can be considered fair for the active links (Proposition~\ref{proposition.weak_par}). 
	
	\subsection*{Implications and relations to existing studies}
	The above characterizations are not only of great theoretical interest, but they also have important practical implications. For instance, in power control problems with per user power constraints, if we stack the power of each user in a vector, the power constraint is typically expressed as an inequality involving the standard $l_\infty$ norm, which is a particular instance of a monotone norm. In this case, if at least one user transmits with maximum power, we obtain utilities on the weak Pareto boundary, so the resulting network configuration is not only fair for the active users, in a precise sense we describe later in Proposition~\ref{proposition.weak_par}, but also efficient. The remarkable fact to emphasize here is that power vectors attaining equality in the monotone norm constraint give a \textit{necessary and sufficient condition for operation on the weak Pareto boundary}, and power vectors with this property can be easily selected without solving any resource allocation problem. To date, to the best of our knowledge, parametrizations of this type have been reported in the wireless literature only for very particular settings, such as the two user MISO interference channel \cite{larsson2011pareto}, the MISO broadcast channel with a sum power constraint \cite[Sect.~4.1.4]{brehmer2012utility}, and coordinated multi-cell networks under a network-wide sum power constraint \cite[Sect. 3.2.3]{bjornson2013resource}. The parametrizations in \cite{brehmer2012utility,bjornson2013resource}
	are the most related to our study because they are both based on a power allocation vector. However, they both rely on arguments based on deterministic channels (without fading) and to the $l_1$ norm constraint. Therefore, they do not apply to the study of modern massive MIMO and cell-less networks, where relevant impairments such as fading and imperfect CSI must be taken into account. Furthermore, they do not apply to different power constraints such as those based on the $l_{\infty}$ norm, which may be more suitable for uplink operation. Our general proof lifts these restrictions, so our theoretical findings can be directly applied to nontrivial and timely wireless resource allocation problems in cell-less systems, for instance, involving the joint optimization of distributed uplink combiners and transmit powers, as shown in Section~\ref{sect.cell_less}.

	\section{Preliminaries}
	\label{sect.pre}
	In this section we establish notation and the basic definitions used in the proof of the main results. In more detail, the sets of nonnegative and positive reals are denoted by, respectively, $\real_+:=[0,\infty[$ and $\real_{++}:=~]0,\infty[$. We use the convention that the $k$th coordinate of a vector $\signal{x}\in\real^K$ is denoted by $x_k$ and $K\in\Natural$ (i.e., we only consider finite-dimensional settings). Inequalities involving vectors should be understood coordinate-wise, so, for example, $(\forall \signal{x}\in\real_+^K)(\forall \signal{y}\in\real_+^K)(\forall k\in\{1,\ldots,K\})~x_k\le y_k \Leftrightarrow \signal{x}\le\signal{y} $. A norm $\|\cdot\|$ in $\real^K$ is said to be \emph{monotone} if $(\forall \signal{x}\in\real_+^K)(\forall \signal{y}\in\real_+^K)~ \signal{x}\le\signal{y}\Rightarrow \|\signal{x}\|\le\|\signal{y}\|$. 
	We say that a sequence $(\signal{x}_n)_{n\in\Natural}\subset\real^N$ converges  to $\signal{x}^\star$ if $\lim_{n\to\infty}\|\signal{x}_n-\signal{x}^\star\|=0$ for some (and hence for every) norm $\|\cdot\|$ in $\real^K$. A set $C\subset\real^K_+$ is said to be downward comprehensive (on $\real^K_+$) if $(\forall\signal{x}\in\real_+^K)(\forall\signal{y}\in C)~\signal{x}\le\signal{y}\Rightarrow\signal{x}\in C$.
	
	The following class of functions plays a crucial role in the results in the next sections: 
	
	\begin{definition}
		\label{definition.mappings} \cite{yates95} A function $f:\real^K_+\to\real_{++}$ is said to be a \emph{standard interference function} if the following properties hold:  \par
		
		[monotonicity] $(\forall \signal{x}\in\real^K_{+})(\forall \signal{y}\in\real^N_{+}) ~ \signal{x}\ge\signal{y} \Rightarrow f(\signal{x})\ge f(\signal{y})$; and \par 
		
		[scalability] $(\forall \signal{x}\in\real^K_+)$ $(\forall \alpha>1)$  $\alpha {f}(\signal{x})>f(\alpha\signal{x})$. \par 
		
		Likewise, a mapping $T:\real_+^K\to\real_{++}^K:\signal{x}\mapsto[f_1(\signal{x}),\ldots,f_N(\signal{x})]$ is said to be a standard interference (SI) mapping if each coordinate function $f_k:\real^K_+\to\real_{++}$ ($k=1,\ldots,K$) is a standard interference function.
	\end{definition}

	We recall that the set of (positive) concave functions $f:\real^K_+\to\real_{++}$ is a proper subset of standard interference functions \cite[Proposition~1]{renato2016}. The set of fixed points of a mapping $T:\real_+^K\to\real_+^K$ is denoted by $\mathrm{Fix}(T):=\{\signal{x}\in\real_+^K~|~T(\signal{x})=\signal{x}\}$.  
	
	In this study, we derive properties of utility optimization problems that have the objective of allocating wireless resources among transceivers in an efficient way, and here we are particularly interested in the following well-established notion of efficiency:

	\begin{definition} \cite{emil2014multiobjective}
		\label{def.def_paretto}
		Given $K\in\Natural$ (utility) functions $(u_k:\real_+^K\to\real)_{k=1,\ldots,K}$, a nonempty set $C\subset\real_+^K$, and a vector $\signal{p}^\star\in C$, we say that the utilities $(u_1(\signal{p}^\star),\ldots,u_K(\signal{p}^\star))$ are on the weak Pareto boundary (of the set of utility values that are simultaneously attainable) under the constraint $C$ if there is no vector $\signal{p}^\prime\in C$ such that $(\forall k\in\{1,\ldots,K\})~u_k(\signal{p}^\prime)>u_k(\signal{p}^\star)$.
	\end{definition}

	\section{The weak Pareto boundary of resource allocation problems in wireless networks}
	
	\label{sect.wpar}

	\subsection{Problem statement and overview of the main results}
	
	Denote by $u_k:\real^K_{+}\to\real_{+}$ the utility function of user $k$ in a system with $K$ users, and by $\mathcal{K}:=\{1,\ldots,K\}$ the set of users. In the applications that can be modeled using the general framework described here, users are coupled by the argument of these functions; i.e., the same vector $\signal{p}$ is used in every function to characterize the overall utility $(u_1(\signal{p}),\ldots,u_K(\signal{p}))$ of the system. For example, in a wireless network, each value $u_k(\signal{p})$ could correspond to the SINR or rate that user $k$ can achieve if transmitters use power $\signal{p}=(p_1,\ldots,p_k)$, where $p_k\ge 0$ is the transmit power of user $k$. For given $\bar{p}\in\real_+$, let $C:=\{\signal{p}\in\real_+^K~|~\|\signal{p}\|\le\bar{p}\}$ be the constraint imposed on the vector $\signal{p}$, where, as in \cite{nuzman07}, $\|\cdot\|$ is a monotone norm on $\real^K$. We recall that the constraint $\|\signal{p}\| \le \bar{p}$ is fairly general: any set $S\subset\real^K_+$ with nonempty interior that is also convex, compact, and downward-comprehensive can be equivalently written as $S=C:=\{\signal{p}\in\real_+^K~|~\|\signal{p}\|\le 1\}$, where the monotone norm \linebreak[4] $\|\cdot\|$ is the Minkowski functional of $\mathrm{conv}(-S\cup S),$ and where $\mathrm{conv(\cdot)}$ denotes the convex hull of a set: \cite[Proposition 2]{renatomaxmin}: $(\forall\signal{p}\in\real^K)~\|\signal{p}\|:=\inf\{\gamma>0~|~(1/\gamma)\signal{p}\in \mathrm{conv}(-S\cup S)\}$. To keep the class of resource allocation problems tractable and yet fairly general, we impose the following common assumption on the utility functions \cite{shindoh2020,yates95}, and later in Sect.~\ref{sect.cell_less} we show that it is valid in optimal distributed combining (receive beamforming) design and  power allocation in cell-less networks:
	\begin{assumption}
		\label{assumption.restrictions}
		For every $k\in\mathcal{K}$, there exists a \emph{continuous} standard interference function $f_k:\real_+^K\to\real_{++}$ such that the utility $u_k$  can be equivalently written as  
		\begin{align}
			\label{eq.util}
			u_k:\real_+^K\to\real_+:\signal{p}\mapsto\dfrac{p_k}{f_k(\signal{p})}.
		\end{align}

	\end{assumption}

	With the above definitions, \emph{the main objective of this study is to characterize the set of vectors $\signal{p}$ that provides utilities $(u_1(\signal{p}), \ldots, u_K(\signal{p}))$ on the weak Pareto boundary under the constraint $C$.} More precisely, in Propositions \ref{proposition.boundary} and \ref{proposition.maxmin} in the next subsection we revisit the fact that the set $\mathcal{B}$ of utilities on the weak Pareto boundary can be obtained by solving weighted max-min problems with existing fixed point algorithms. This result is then used to prove in Corollary~\ref{cor.main} that the set $\mathcal{B}$ can be expressed as $\mathcal{B}=\{(u_1(\signal{p}),\ldots,u_K(\signal{p}))\in\real_+^K~|~\signal{p}\in\real_+^K\text{ and }\|\signal{p}\|=\bar{p}\}$. In particular, this interesting characterization shows that efficient utilities in the sense of Definition~\ref{definition.mappings} can be obtained by simply selecting any vector $\signal{p}\in\real^K_+$ satisfying $\|\signal{p}\|=\bar{p}$. In addition, given a vector $\signal{p}$ with this property, we show in Proposition~\ref{proposition.weak_par}  a criterion that characterizes the resulting utilities as fair for the active links (i.e., for a reduced network constructed by removing links $k\in\mathcal{K}$ such that $p_k=0$).

	\subsection{Characterization of the weak Pareto boundary}
	\label{sect.max_min}
	
	To prove the main results in this study, we use extensively the following implications of Assumption~\ref{assumption.restrictions}.
	
	\begin{lemma}
		\label{lemma.properties_u}
		Utility functions $(u_k)_{k\in\mathcal{K}}$ satisfying Assumption~\ref{assumption.restrictions} have the following properties:
		\begin{itemize}
			\item[(i)] All functions $(u_k)_{k\in\mathcal{K}}$ are continuous.
			\item[(ii)] $(\forall k\in\mathcal{K})(\forall\signal{p}\in\real_{++}^K)~u_k(\signal{p})>0$.
			\item[(iii)]  $(\forall k\in\mathcal{K})(\forall \signal{p}\in\real_+^K)~ u_k(\signal{p})=0\Leftrightarrow p_k=0$.
			\item[(iv)] $(\forall k\in\mathcal{K})(\forall \signal{p}\in\real_{+}^{K})(\forall \alpha>1)~ p_k > 0\Rightarrow u_k(\alpha \signal{p}) > u_k(\signal{p})$.
			\item[(v)] $(\forall k\in\mathcal{K})(\forall \signal{p}\in\real_{+}^{K})(\forall \signal{x}\in\real_{+}^{K})~ {p}_k={x}_k\text{ and }\signal{p}\ge\signal{x} \Rightarrow u_k(\signal{p})\le u_k(\signal{x})$.
		\end{itemize}

		\end{lemma}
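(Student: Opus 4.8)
The plan is to verify each of the five items directly from the representation \eqref{eq.util}, $u_k(\signal{p}) = p_k/f_k(\signal{p})$, using only that $f_k$ is continuous (Assumption~\ref{assumption.restrictions}) and that it is a standard interference function, i.e.\ monotone and scalable in the sense of Definition~\ref{definition.mappings}. The single fact that does most of the work is that, by definition of a standard interference function, $f_k(\signal{p})$ is a \emph{finite, strictly positive} real number for every $\signal{p}\in\real_+^K$; hence dividing by $f_k(\signal{p})$ is always a legitimate, sign-preserving, order-reversing operation.

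For (i), I would observe that $\signal{p}\mapsto p_k$ is continuous, $f_k$ is continuous by assumption, and $f_k$ is nowhere zero since it maps into $\real_{++}$; therefore the quotient $p_k/f_k(\signal{p})$ is continuous on all of $\real_+^K$. For (ii) and (iii), since $0<f_k(\signal{p})<\infty$, the value $u_k(\signal{p})=p_k/f_k(\signal{p})$ has the same sign as $p_k$: this gives $u_k(\signal{p})>0$ whenever $p_k>0$, in particular for all $\signal{p}\in\real_{++}^K$, and the equivalence $u_k(\signal{p})=0\Leftrightarrow p_k=0$.

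For (iv), fix $k$, a vector $\signal{p}\in\real_+^K$ with $p_k>0$, and $\alpha>1$. Scalability gives $f_k(\alpha\signal{p})<\alpha f_k(\signal{p})$, and since both sides are strictly positive this yields $1/f_k(\alpha\signal{p})>1/(\alpha f_k(\signal{p}))$. Multiplying by $\alpha p_k>0$ and simplifying, $u_k(\alpha\signal{p})=\alpha p_k/f_k(\alpha\signal{p})>\alpha p_k/(\alpha f_k(\signal{p}))=p_k/f_k(\signal{p})=u_k(\signal{p})$; note that $p_k>0$ is exactly what is needed to keep this inequality strict. For (v), suppose $p_k=x_k$ and $\signal{p}\ge\signal{x}$. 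Monotonicity of $f_k$ gives $f_k(\signal{p})\ge f_k(\signal{x})>0$, hence $1/f_k(\signal{p})\le 1/f_k(\signal{x})$, and multiplying by $p_k=x_k\ge 0$ gives $u_k(\signal{p})=p_k/f_k(\signal{p})\le x_k/f_k(\signal{x})=u_k(\signal{x})$ (both sides vanishing in the degenerate case $p_k=x_k=0$).

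I do not expect a genuine obstacle here: each item collapses to an elementary manipulation of a quotient whose denominator is a finite positive number. The only points that require a little care are the preservation of the \emph{strict} inequality in (iv) — which is precisely why the hypothesis $p_k>0$ is imposed and why scalability (a strict inequality) rather than mere subhomogeneity is used — and checking the boundary case $p_k=0$ in (ii), (iii), and (v), where the statements hold trivially.
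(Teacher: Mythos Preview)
Your proposal is correct and follows essentially the same approach as the paper: items (i)--(iii) are immediate from positivity and continuity of $f_k$, item (iv) is the one-line computation using scalability, and item (v) is the one-line computation using monotonicity. Your write-up is simply more explicit about the boundary cases and about why the strict inequality in (iv) survives, but the underlying argument is identical.
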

		\begin{proof}
			(i)-(iii) Immediate from positivity and continuity of $(f_k)_{k\in\mathcal{K}}$ on $\real_{+}^K$.
			
			(iv) Use positivity and scalability of standard interference functions in \refeq{eq.util} to deduce $(\forall k\in\mathcal{K})(\forall \signal{p}\in\real_+^K)(\forall\alpha>1)$
			\begin{align*}
				p_k > 0 \Rightarrow u_k(\alpha \signal{p})=\dfrac{\alpha p_{k}}{f_k(\alpha \signal{p})} > \dfrac{\alpha~p_k}{\alpha~f_k(\signal{p})} = u_k(\signal{p}).
			\end{align*}
			
			(v) Assume that the tuple  $(\signal{p},\signal{x})\in\real_+^K\times\real_+^K$ satisfies $p_k={x}_k$ and $\signal{p}\ge\signal{x}$ for some $k\in\mathcal{K}$. Monotonicity of standard interference functions and \refeq{eq.util} yield
			\begin{align*}
				u_k(\signal{p})=\dfrac{p_k}{f_k(\signal{p})} \le \dfrac{p_k}{f_k(\signal{x})} = \dfrac{x_k}{f_k(\signal{x})} = u_k(\signal{x}),
			\end{align*}
			and the proof is complete.
		\end{proof}

		Now, consider the following weighted max-min utility optimization problem:
		
		\begin{align}
			\label{problem.maxmin}
			\begin{array}{rl}
				\text{maximize}_{\signal{p}\in\real_{+}^K} & \min_{k\in\mathcal{K}} \omega_k^{-1} u_k(\signal{p}) \\
				\text{subject to} & \|\signal{p}\| \le \bar{p},
			\end{array}
		\end{align}
		where, as in the previous subsection, $\|\cdot\|$ is a monotone norm on $\real^K$, $\bar{p}>0$ is the maximum (power) budget, and $(\omega_1,\ldots,\omega_K)\in\real_{++}^K$ are weights/priorities assigned to the utility of user $k$. In particular, these weights define the criterion of fairness in the optimization problem, and a well-known result is that points with all positive utilities on the weak Pareto boundary can be obtained by changing the weights. Since previous studies often use slightly different assumptions and notation, we prove this standard result for the settings under consideration in the next proposition.
		
		\begin{proposition}
			\label{proposition.boundary}
			Given a monotone norm $\|\cdot\|$ on $\real_+^K$, a scalar $\bar{p}>0$, a vector $\signal{p}^\star\in C:=\{\signal{p}~\in\real_+^K|~\|\signal{p}\|\le\bar{p}\}$, and the utilities $(u_k)_{k\in\mathcal{K}}$ in problem~\refeq{problem.maxmin}, assume that $(\forall k\in\mathcal{K})~u_k(\signal{p}^\star)>0$. If the utilities $(u_1(\signal{p}^\star),\ldots,u_K(\signal{p}^\star))$ are on the weak Pareto boundary under the constraint $C$, then $\signal{p}^\star$ is a solution to problem~\refeq{problem.maxmin} with weights given by
			\begin{align}
				\label{eq.optw}
				(\forall k\in\mathcal{K})\quad\omega_k=u_k(\signal{p}^\star).
			\end{align}
		\end{proposition}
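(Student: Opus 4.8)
The plan is to exploit the fact that the proposed weights normalize every utility at $\signal{p}^\star$ to the common value one, which turns optimality of $\signal{p}^\star$ for problem~\refeq{problem.maxmin} into an almost direct restatement of the weak Pareto property in Definition~\ref{def.def_paretto}. Concretely, I would first note that the weights in \refeq{eq.optw} are strictly positive by the standing assumption $(\forall k\in\mathcal{K})~u_k(\signal{p}^\star)>0$, so the objective $\min_{k\in\mathcal{K}}\omega_k^{-1}u_k(\signal{p})$ is well defined, and that $\signal{p}^\star\in C$ is feasible by hypothesis. Evaluating the objective at $\signal{p}^\star$ gives $\min_{k\in\mathcal{K}}\omega_k^{-1}u_k(\signal{p}^\star)=\min_{k\in\mathcal{K}}\big(u_k(\signal{p}^\star)/u_k(\signal{p}^\star)\big)=1$.

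Next I would argue by contradiction. Suppose $\signal{p}^\star$ is not a solution to problem~\refeq{problem.maxmin}; since $\signal{p}^\star$ is feasible, this means there exists $\signal{p}^\prime\in C$ with $\min_{k\in\mathcal{K}}\omega_k^{-1}u_k(\signal{p}^\prime)>1$. By definition of the minimum, $\omega_k^{-1}u_k(\signal{p}^\prime)>1$ for every $k\in\mathcal{K}$, i.e., $u_k(\signal{p}^\prime)>\omega_k=u_k(\signal{p}^\star)$ for all $k\in\mathcal{K}$. This exhibits a vector $\signal{p}^\prime\in C$ that strictly increases all utilities relative to $\signal{p}^\star$, contradicting the hypothesis that $(u_1(\signal{p}^\star),\ldots,u_K(\signal{p}^\star))$ lies on the weak Pareto boundary under the constraint $C$. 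Hence $\signal{p}^\star$ must be a solution to problem~\refeq{problem.maxmin} with the weights in \refeq{eq.optw}.

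I do not expect a genuine obstacle in this direction: the claim is essentially a reformulation of Definition~\ref{def.def_paretto} once the weights are chosen to equalize the normalized utilities, and the only points that require a brief check are that the weights are admissible (guaranteed by positivity of $u_k(\signal{p}^\star)$) and that $\signal{p}^\star$ itself is feasible (granted by assumption). Note in particular that Properties (i)--(v) of Lemma~\ref{lemma.properties_u} are not needed here; they should instead become relevant for the converse statement (Proposition~\ref{proposition.maxmin}) and for the boundary characterization in Corollary~\ref{cor.main}, where one needs monotonicity, scalability, and continuity of the $u_k$ to relate max-min optima, fixed points, and equality in the norm constraint.
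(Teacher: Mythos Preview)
Your proof is correct and mirrors the paper's argument almost verbatim: both set $\omega_k=u_k(\signal{p}^\star)$, observe that the objective at $\signal{p}^\star$ equals $1$, and derive a contradiction with Definition~\ref{def.def_paretto} from any feasible $\signal{p}^\prime$ achieving objective strictly larger than $1$. Your additional remarks on the admissibility of the weights and the irrelevance of Lemma~\ref{lemma.properties_u} here are accurate but not present in the paper's terser write-up.
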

		\begin{proof}
			The proof is obtained with a simple contradiction. If the utilities $(u_1(\signal{p}^\star),\ldots,u_K(\signal{p}^\star))$ are on the weak Pareto boundary under the constraint $C$, we have  $\signal{p}^\star\in C$, so $\signal{p}^\star$ is feasible to problem~\refeq{problem.maxmin}. Therefore, if $\signal{p}^\star$ is not a solution to problem~\refeq{problem.maxmin} with the weights in \refeq{eq.optw}, there exists $\signal{p}^\prime\in C$ such that 
			\begin{multline*}
				\min_{k\in\mathcal{K}}~\dfrac{1}{u_k(\signal{p}^\star)} u_k(\signal{p}^\prime) > \min_{k\in \mathcal{K}} \dfrac{1}{u_k(\signal{p}^\star)} u_k(\signal{p}^\star) = 1 \\ \Leftrightarrow (\forall k\in\mathcal{K})~u_k(\signal{p}^\prime) > u_k(\signal{p}^\star),
			\end{multline*}
			and thus we contradict that $(u_1(\signal{p}^\star),\ldots,u_K(\signal{p}^\star))$ is on the weak Pareto boundary under the constraint $C$.
		\end{proof}

		We now proceed to relate problem~\refeq{problem.maxmin} with the fixed point problems discussed in \cite{renato2019,nuzman07}. Similar relations have also been shown in \cite{tan2014wireless,zheng2016}, but a careful look at  Lemma~\ref{lemma.properties_u} shows that the competitiveness assumption in \cite[Assumption~1]{tan2014wireless}\cite[Assumption~1]{zheng2016} is not required in our framework. As a result, we are able to cover important models such as the scalar Gaussian multiple access channel or the scalar Gaussian broadcast channel under capacity achieving coding schemes based on successive interference cancellation. By lifting the competitiveness  assumption, the solution to problem \refeq{problem.maxmin} is not necessarily unique, and existing results characterizing the weak Pareto boundary in very particular settings, such as those in \cite[Corollaries~4.1.8 and 4.1.12]{brehmer2012utility}, do not necessarily hold.
		
		Writing problem \refeq{problem.maxmin} in its epigraph form, we obtain:
		\begin{align}
			\label{problem.maxmin2}
			\begin{array}{rl}
				\text{maximize}_{(c,\signal{p})\in\real_+\times\real_+^K} &  c  \\
				\text{subject to} & \|\signal{p}\| \le \bar{p} \\ 
				& (\forall k\in\mathcal{K})\quad u_k(\signal{p})\ge  c~\omega_k.
			\end{array}
		\end{align}
		
		\begin{remark} 
			\label{remark.equivalence}
			Problems \refeq{problem.maxmin} and \refeq{problem.maxmin2} are equivalent in the following sense. If $(c^\star,\signal{p}^\star)$ solves problem~\refeq{problem.maxmin2}, then $\signal{p}^\star$ is a solution to problem~\refeq{problem.maxmin}. Conversely, if $\signal{p}^\star$ is a solution to problem~\refeq{problem.maxmin}, then $\left(c^\star:=\min_{k\in\mathcal{K}} \omega_k^{-1} u_k(\signal{p}^\star),\signal{p}^\star\right)$ is a solution to problem~\refeq{problem.maxmin2}. Furthermore, Lemma~\ref{lemma.properties_u}(ii) and $\bar{p}>0$ imply $c^\star\ge \min_{k\in\mathcal{K}} u_k(\signal{v})/\omega_k>0$, where $\signal{v}=({\bar{p}}/{\|\signal{1}_K\|})\signal{1}_K$ and $\signal{1}_K\in\real_+^K$ is the vector of ones. As a result, we also have $(\forall k\in\mathcal{K})~{u}_k(\signal{p}^\star)>0$, and thus $\signal{p}^\star\in\real_{++}^K$ as a consequence of  Lemma~\ref{lemma.properties_u}(iii).
		\end{remark}

		The next simple lemma shows that the inequalities $(\forall k\in\mathcal{K})\quad u_k(\signal{p})\ge  c~\omega_k$ in problem \refeq{problem.maxmin2} are equivalent to a vector inequality involving standard interference mappings.
		
		\begin{lemma}
			\label{lemma.constraint}
			Fix $\signal{p}=[p_1,\ldots,p_K]\in\real^K_+$ and $c\ge0$. Then the inequalities $(\forall k\in\mathcal{K})~u_k(\signal{p})\ge c~\omega_k$ in the constraints of problem~\refeq{problem.maxmin2} are satisfied if and only if $(\forall k\in\mathcal{K})~p_k\ge c\omega_k f_k(\signal{p})$, or, in vector form, $\signal{p} \ge cT(\signal{p})$, where
			\begin{equation}
				\label{eq.mappingT}
				T:\real^K_{+}\to \real^K_{++}:\signal{p}\mapsto \left[\begin{matrix} \omega_1  f_1(\signal{p}) \\ \vdots \\ \omega_K f_K(\signal{p}) \end{matrix}\right],
			\end{equation}
			and $(f_k)_{k\in\mathcal{K}}$ and $(u_k)_{k\in\mathcal{K}}$ are related according to Assumption~\ref{assumption.restrictions}. In particular, $(\forall k\in\mathcal{K})~u_k(\signal{p})=  c~\omega_k$ if and only if $\signal{p}=cT(\signal{p})$.
		\end{lemma}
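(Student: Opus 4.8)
The plan is to prove the equivalence coordinate by coordinate and then ``stack'' the $K$ scalar equivalences into the asserted vector inequality. The one structural feature that makes everything work is that, under Assumption~\ref{assumption.restrictions}, each $f_k$ maps into $\real_{++}$, so $f_k(\signal{p})>0$ for every $\signal{p}\in\real_+^K$; this means that multiplying or dividing a scalar (in)equality by $f_k(\signal{p})$ is always a legitimate, direction-preserving operation, and hence reversible. Note also that $c\ge 0$ and $\omega_k>0$ give $c\,\omega_k\ge 0$, which is consistent with $u_k(\signal{p})=p_k/f_k(\signal{p})\ge 0$, so there are no sign subtleties to worry about.

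First I would fix $k\in\mathcal{K}$ and use the representation $u_k(\signal{p})=p_k/f_k(\signal{p})$ from \refeq{eq.util}: multiplying the scalar inequality $u_k(\signal{p})\ge c\,\omega_k$ by $f_k(\signal{p})>0$ yields the equivalent inequality $p_k\ge c\,\omega_k f_k(\signal{p})$, and dividing the latter by $f_k(\signal{p})>0$ recovers the former, so the two scalar conditions are equivalent for each $k$. Next I would collect these $K$ equivalences: by the coordinate-wise convention for vector inequalities recalled in Section~\ref{sect.pre}, the system $(\forall k\in\mathcal{K})~p_k\ge c\,\omega_k f_k(\signal{p})$ is literally $\signal{p}\ge cT(\signal{p})$ with $T$ as in \refeq{eq.mappingT}; since the per-coordinate equivalences hold simultaneously, the conjunction over $k$ of $u_k(\signal{p})\ge c\,\omega_k$ is equivalent to $\signal{p}\ge cT(\signal{p})$. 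For the ``in particular'' statement I would repeat the identical one-line manipulation with equalities in place of inequalities: $u_k(\signal{p})=c\,\omega_k \Leftrightarrow p_k = c\,\omega_k f_k(\signal{p})$ for each $k$, and stacking over $k\in\mathcal{K}$ turns this into the fixed-point identity $\signal{p}=cT(\signal{p})$.

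There is essentially no hard part here; the only thing that must not be skipped is the explicit appeal to the strict positivity of $f_k(\signal{p})$ (guaranteed by the codomain $\real_{++}$ in Definition~\ref{definition.mappings}), since it is exactly what licenses the reversible algebra and keeps the inequality direction intact. The lemma will then be used to rewrite the constraint $(\forall k\in\mathcal{K})~u_k(\signal{p})\ge c\,\omega_k$ of problem~\refeq{problem.maxmin2} as the monotone-norm-constrained fixed-point inequality $\signal{p}\ge cT(\signal{p})$, providing the bridge to the fixed point results of \cite{renato2019,nuzman07}.
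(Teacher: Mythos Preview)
Your proof is correct. Both your argument and the paper's hinge on the same identity $u_k(\signal{p})=p_k/f_k(\signal{p})$, but the execution differs slightly: the paper first disposes of $c=0$ as trivial, then for $c>0$ reduces to the subcase $p_k>0$ (arguing that otherwise both sides of the equivalence fail, by Lemma~\ref{lemma.properties_u}(iii) and positivity of $f_k$), and only then performs the manipulation by dividing by $u_k(\signal{p})>0$ and multiplying by $p_k>0$. Your route---multiplying the inequality $p_k/f_k(\signal{p})\ge c\,\omega_k$ directly by the strictly positive $f_k(\signal{p})$---is more elementary and dispenses with the case analysis entirely, since it never requires $p_k$, $c$, or $u_k(\signal{p})$ to be strictly positive.
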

		\begin{proof}
			The case $c=0$ is trivial because the functions $(u_k)_{k\in\mathcal{K}}$ are nonnegative, so we only consider the case $c>0$. Without any loss of generality, for every $k\in\mathcal{K}$, we can further assume that $p_k>0$ and $u_k(\signal{p})>0$ because otherwise the inequality $u_k(\signal{p})\ge c\omega_k$ or the inequality $p_k\ge c\omega_k f_k(\signal{p})$, or both, cannot hold in light of Lemma~\ref{lemma.properties_u}(iii), positivity of $f_k$, and positivity of the weight $\omega_k$. As a result, 
			\begin{multline}
				\label{eq.ineqa}
				(\forall k\in\mathcal{K})~
				u_k(\signal{p})\ge c \omega_k \Leftrightarrow 1 \ge \dfrac{c \omega_k}{u_k(\signal{p})} \\ \Leftrightarrow p_k \ge \dfrac{p_kc \omega_k}{u_j(\signal{p})} \Leftrightarrow p_k\ge c\omega_k f_k(\signal{p})>0,
			\end{multline} 
			and the proof of the first part of the lemma is complete. Replacing the weak inequalities in \refeq{eq.ineqa} with equalities, we conclude that $(\forall k\in\mathcal{K})~p_k = c\omega_k f_k(\signal{p}) \Leftrightarrow \signal{p}=cT(\signal{p})$, which completes the proof.
		\end{proof}

		Replacing the inequalities in the constraints $(\forall k\in\mathcal{K})\quad u_k(\signal{p})\ge  c~\omega_k$ in problem~\refeq{problem.maxmin2} with equalities, and using Lemma~\ref{lemma.constraint}, we obtain the following optimization problem, which we later show to be able to obtain a particular solution to problem~\refeq{problem.maxmin2} [and hence to problem~\refeq{problem.maxmin}]:
		
		\begin{align}
			\label{problem.maxmin3}
			\begin{array}{rl}
				\text{maximize}_{(c,\signal{p})\in\real_+\times\real_+^K} & ~ c  \\
				\text{subject to} & \|\signal{p}\| \le \bar{p} \\ 
				& \signal{p}=cT(\signal{p}),
			\end{array}
		\end{align}
		where $T:\real^K_+\to\real_{++}^K$ is the mapping in \refeq{eq.mappingT}. The next proposition relates all the optimization problems discussed above.
		
		\begin{proposition} \label{proposition.maxmin} Assume that problems \refeq{problem.maxmin}, \refeq{problem.maxmin2}, and \refeq{problem.maxmin3} use the same monotone norm $\|\cdot\|$ and power budget $\bar{p}>0$. Then each of the following holds:
			
			(i) If $(c^\star, \signal{p}^\star)$ solves problem \refeq{problem.maxmin2}, then there exists $\signal{p}^\prime\in\real_{++}^K$  such that   $\signal{p}^\prime\le\signal{p}^\star$ and $(c^\star,\signal{p}^\prime)$ is a common solution to problems  \refeq{problem.maxmin2} and \refeq{problem.maxmin3}.
			
			(ii) Problem~\refeq{problem.maxmin3} has a unique solution given by $(\bar{p}/\|T(\signal{p}^\star)\|,~\signal{p}^\star)$, where $\signal{p}^\star\in\real_{++}^K$ is the unique fixed point of the mapping
			\begin{align}
				\label{eq.nmapping}
				\tilde{T}:\real^K_{+}\to\real^K_{++}:\signal{p}\mapsto \dfrac{\bar{p}}{\|T(\signal{p})\| } T(\signal{p}).
			\end{align}
			Furthermore, for any $\signal{p}_1\in\real_{+}^K$, the vector $\signal{p}^\star$ is also the limit of the sequence $(\signal{p}_n)_{n\in\Natural}$ generated via $\signal{p}_{n+1} = \tilde{T}(\signal{p}_{n})$.

			(iii) If $(c^\star, \signal{p}^\star)$ solves problem \refeq{problem.maxmin3}, then $(c^\star, \signal{p}^\star)$ also solves problem \refeq{problem.maxmin2}, and hence $\signal{p}^\star$ is a solution to problem \refeq{problem.maxmin} as a consequence of Remark~\ref{remark.equivalence}.

		\end{proposition}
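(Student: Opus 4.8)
The plan is to establish (i) first and then derive (ii) and (iii) from it, relying throughout on a few standard facts about standard interference (SI) mappings: for any scalar $c>0$ the mapping $cT$ is again a continuous SI mapping (monotonicity is clear, and scalability is inherited from $T$); a continuous SI mapping has at most one fixed point (a scaling argument using monotonicity and strict scalability applied to the coordinate attaining $\max_k x_k/y_k$); and if a continuous SI mapping $S$ satisfies $S(\signal{q})\le\signal{q}$ for some $\signal{q}\in\real_+^K$, then the iteration $\signal{q}_1:=\signal{q}$, $\signal{q}_{n+1}:=S(\signal{q}_n)$ is nonincreasing and converges to a fixed point $\signal{q}^\prime\in\real_{++}^K$ with $\signal{q}^\prime\le\signal{q}$. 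I will also use the elementary observation that for a monotone norm, $\signal{x}<\signal{y}$ coordinatewise with $\signal{x}\in\real_{++}^K$ implies $\|\signal{x}\|<\|\signal{y}\|$, since then $(1+\epsilon)\signal{x}\le\signal{y}$ for some $\epsilon>0$. For (i): a solution $(c^\star,\signal{p}^\star)$ of \refeq{problem.maxmin2} has $c^\star>0$ by Remark~\ref{remark.equivalence}, and Lemma~\ref{lemma.constraint} rewrites its constraints as $\signal{p}^\star\ge c^\star T(\signal{p}^\star)$; applying the descending-iteration fact to $S:=c^\star T$ produces $\signal{p}^\prime\in\real_{++}^K$ with $\signal{p}^\prime\le\signal{p}^\star$ and $\signal{p}^\prime=c^\star T(\signal{p}^\prime)$, hence $\|\signal{p}^\prime\|\le\|\signal{p}^\star\|\le\bar{p}$. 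Since the feasible set of \refeq{problem.maxmin3} is contained in that of \refeq{problem.maxmin2} (an equality constraint implies the corresponding inequality, by Lemma~\ref{lemma.constraint}) with the same objective, the optimal value of \refeq{problem.maxmin3} is at most $c^\star$; as $(c^\star,\signal{p}^\prime)$ is feasible for both problems and attains $c^\star$, it solves both.

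For (ii) I would first invoke the known theory of normalized SI mappings from \cite{nuzman07,renato2019} to obtain that $\tilde{T}$ has a unique fixed point $\signal{p}^\star\in\real_{++}^K$ and that $\signal{p}_{n+1}=\tilde{T}(\signal{p}_n)$ converges to it from any $\signal{p}_1\in\real_+^K$; note $\|\signal{p}^\star\|=\|\tilde{T}(\signal{p}^\star)\|=\bar{p}$ directly from the definition of $\tilde{T}$. Set $c^\star:=\bar{p}/\|T(\signal{p}^\star)\|>0$, so that $\signal{p}^\star=c^\star T(\signal{p}^\star)$ and $(c^\star,\signal{p}^\star)$ is feasible for \refeq{problem.maxmin3}. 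The crux is to show that every feasible $(c,\signal{p})$ of \refeq{problem.maxmin3} satisfies $c\le c^\star$: the case $c=0$ is trivial, and for $c>0$ the vector $\signal{p}$ is a fixed point of the SI mapping $cT$, so it suffices to prove that whenever $0<a<b$ and $\signal{x}_a,\signal{x}_b$ are fixed points of $aT$ and $bT$, then $\signal{x}_a<\signal{x}_b$ coordinatewise; this follows by a Collatz--Wielandt-type argument: letting $\alpha:=\max_k (x_a)_k/(x_b)_k$ and assuming $\alpha\ge1$, the chain $\signal{x}_a=aT(\signal{x}_a)\le aT(\alpha\signal{x}_b)\le a\alpha T(\signal{x}_b)=(a\alpha/b)\,\signal{x}_b$ is contradicted in the coordinate attaining $\alpha$, forcing $\alpha<1$. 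Consequently, if some feasible $(c,\signal{p})$ had $c>c^\star$ we would get $\signal{p}^\star<\signal{p}$ and hence $\bar{p}=\|\signal{p}^\star\|<\|\signal{p}\|\le\bar{p}$, which is impossible; therefore $c^\star$ is the optimal value of \refeq{problem.maxmin3} and $(c^\star,\signal{p}^\star)$ attains it. Uniqueness of the solution then follows because any optimal $(c^\star,\signal{p})$ makes $\signal{p}$ a fixed point of the SI mapping $c^\star T$, which has at most one.

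For (iii), let $(c^\star,\signal{p}^\star)$ solve \refeq{problem.maxmin3}; by (ii) $c^\star>0$, and feasibility together with Lemma~\ref{lemma.constraint} give $\signal{p}^\star=c^\star T(\signal{p}^\star)$, $\|\signal{p}^\star\|\le\bar{p}$, and $u_k(\signal{p}^\star)=c^\star\omega_k$ for all $k$, so $(c^\star,\signal{p}^\star)$ is feasible for \refeq{problem.maxmin2} with objective $c^\star$. For optimality, take any feasible $(c,\signal{p})$ of \refeq{problem.maxmin2} with $c>0$; by Lemma~\ref{lemma.constraint} $\signal{p}\ge cT(\signal{p})$, so the descending-iteration fact yields $\signal{p}^{\prime\prime}\in\real_{++}^K$ with $\signal{p}^{\prime\prime}\le\signal{p}$ and $\signal{p}^{\prime\prime}=cT(\signal{p}^{\prime\prime})$, whence $\|\signal{p}^{\prime\prime}\|\le\|\signal{p}\|\le\bar{p}$ and $(c,\signal{p}^{\prime\prime})$ is feasible for \refeq{problem.maxmin3}; part (ii) then gives $c\le c^\star$. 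Thus $(c^\star,\signal{p}^\star)$ solves \refeq{problem.maxmin2}, and Remark~\ref{remark.equivalence} finally gives that $\signal{p}^\star$ solves \refeq{problem.maxmin}.

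The step I expect to be the main obstacle is the bound $c\le c^\star$ in (ii): it rests on the strict monotonicity of the conditional eigenvector in its scaling parameter (the Collatz--Wielandt-type argument above), which, combined with the monotone-norm observation, is precisely what converts the soft constraint $\|\signal{p}\|\le\bar{p}$ into the sharp cap $c\le c^\star$ on the achievable objective. All remaining steps are routine bookkeeping with Lemmas~\ref{lemma.properties_u} and \ref{lemma.constraint}, Remark~\ref{remark.equivalence}, and the cited fixed-point results for $\tilde{T}$.
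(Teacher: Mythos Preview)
Your proof is correct, and part (i) is essentially identical to the paper's argument (both reduce to the standard fact that $S(\signal{q})\le\signal{q}$ for an SI mapping forces a fixed point below $\signal{q}$, together with norm monotonicity and the containment of feasible sets).

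Parts (ii) and (iii), however, follow a genuinely different route. For (ii), the paper simply cites \cite[Fact~5(i)--(ii)]{renato2019}\cite{nuzman07} and declares the result immediate; you cite the same references only for the existence/uniqueness/convergence of $\mathrm{Fix}(\tilde{T})$, and then supply a self-contained Collatz--Wielandt comparison (fixed points of $aT$ and $bT$ are strictly ordered in $a$) plus the strict monotone-norm inequality to show optimality and uniqueness of the solution. This buys you more transparency about \emph{why} the normalized fixed point is optimal, at the cost of extra work that the paper outsources to the literature. For (iii), the paper argues existence for \refeq{problem.maxmin} via the Weierstrass extreme value theorem, lifts it to \refeq{problem.maxmin2} via Remark~\ref{remark.equivalence}, and then uses (i) together with the uniqueness in (ii) to identify the common solution. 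You instead take an arbitrary feasible $(c,\signal{p})$ of \refeq{problem.maxmin2}, push it down via the descending iteration to a feasible point of \refeq{problem.maxmin3}, and invoke the optimal value of \refeq{problem.maxmin3} from (ii) to bound $c\le c^\star$. Your route avoids the compactness/continuity appeal entirely and is arguably cleaner, while the paper's route is shorter because it reuses (i) rather than rerunning the descending-iteration argument.
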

		\begin{proof}
			
			(i) Taking into account Lemma~\ref{lemma.constraint}, we verify that the result is immediate if $\signal{p}^\star = c^\star T(\signal{p}^\star)$, so we only consider the case $\signal{p}^\star \ge c^\star T(\signal{p}^\star)$ with strict inequality in at least one coordinate. Since the optimal utility $c^\star>0$ as a consequence of Remark~\ref{remark.equivalence},  the mapping defined by $\mathcal{I}:\real_{+}^K\to\real_{++}^K:\signal{p}\mapsto c^\star T(\signal{p})$ is a standard interference mapping. It follows from \cite[Fact~4]{renato2016} and $\signal{p}^\star \ge c^\star T(\signal{p}^\star)=\mathcal{I}(\signal{p}^\star)$  that (i) $\mathrm{Fix}(\mathcal{I})\neq\emptyset$, (ii) the fixed point $\signal{p}^\prime \in \mathrm{Fix}(\mathcal{I})$ of the mapping $\mathcal{I}$ is unique, (iii) and $\signal{p}^\prime \le \signal{p}^\star$. Furthermore, $\signal{p}^\prime\neq\signal{p}^\star$ because at least one inequality in the constraint $c^\star T(\signal{p}^\star)\le \signal{p}^\star\notin \mathrm{Fix}(\mathcal{I})$ is assumed to be strict. Since the norm $\|\cdot\|$ is monotone and $\|\signal{p}^\star\|\le \bar{p}$ [because $(c^\star, \signal{p}^\star)$ solves problem \refeq{problem.maxmin2} by assumption], the inequality $\signal{p}^\prime \le \signal{p}^\star$ implies $\|\signal{p}^\prime\|  \le \|\signal{p}^\star\|\le \bar{p}$. As a result, $(c^\star,\signal{p}^\prime)$ satisfies all constraints in problem \refeq{problem.maxmin2}, so this tuple is also a solution to problem \refeq{problem.maxmin2} because the optimal objective value $c^\star$ is unchanged. Furthermore, since $\signal{p}^\prime=c^\star T(\signal{p}^\prime)$ because $\signal{p}^\prime\in\mathrm{Fix}(\mathcal{I})$ by definition,  the tuple   $(c^\star,\signal{p}^\prime)$ is feasible to both problem~\refeq{problem.maxmin2} and problem~\refeq{problem.maxmin3}. Recalling that the feasible set of problem \refeq{problem.maxmin2} contains all points of the feasible set of problem \refeq{problem.maxmin3}, and both problems have the same objective function, we conclude that $(c^\star,\signal{p}^\prime)$ has to be a solution to problem \refeq{problem.maxmin3} because the optimal objective of problem \refeq{problem.maxmin3} is upper bounded by the optimal objective $c^\star$ of problem \refeq{problem.maxmin2}, and we have just shown that, with the feasible tuple $(c^\star,\signal{p}^\prime)$, the optimal objective $c^\star$ is attained for both problem~\refeq{problem.maxmin2} and problem~\refeq{problem.maxmin3}.
			
			(ii) Immediate from \cite[Fact~5(i)-(ii)]{renato2019}\cite{nuzman07}.
			
			(iii) Part (ii) of the proposition shows that problem~(\ref{problem.maxmin3}) has a unique solution that we denote by $(c^\star, \signal{p}^\star)$. From the Weierstrass extreme value theorem, we know that  problem~\refeq{problem.maxmin} has at least one solution because its constraint is a compact set and the cost function is continuous as an implication of Lemma~\ref{lemma.properties_u}(i). Therefore, it follows from  Remark~\ref{remark.equivalence} that problem~\refeq{problem.maxmin2} also has a solution. Part (i) of the proof now shows that problems \refeq{problem.maxmin2} and \refeq{problem.maxmin3} have a common solution, so it must be $(c^\star, \signal{p}^\star)$ because the solution to problem~\refeq{problem.maxmin3} is unique.\end{proof}	
	
	Remark~\ref{remark.equivalence} and the next proposition show that any positive vector $\signal{p}^\star$ satisfying $\|\signal{p}^\star\|=\bar{p}$ is a solution to problem~\refeq{problem.maxmin} for a specific choice of weights, which are the parameters defining the criterion of fairness. As a result, such a vector $\signal{p}^\star$ not only provides fairness in a well-defined sense, but it is also provides efficient utilities in the sense of Definition~\ref{def.def_paretto}.

	\begin{proposition}
		\label{proposition.weak_par}
		Let $\signal{p}^\star=[p_1^\star,\ldots,p_K^\star]\in\real_{++}^K$ satisfy $\|\signal{p}^\star\|=\bar{p}$ for a given monotone norm $\|\cdot\|$ and $\bar{p}>0$. Then $\signal{p}^\star$ solves problem \refeq{problem.maxmin} with the same norm $\|\cdot\|$ and power budget $\bar{p}$ if the weights are set to
		\begin{align}
			\label{eq.weights}
			(\forall k\in\mathcal{K})\quad \omega_k = \dfrac{p^\star_k}{f_k(\signal{p}^\star)}={u_k(\signal{p}^\star)}>0,
		\end{align}
		where the functions $(f_k)_{k\in\mathcal{K}}$ are defined in Assumption~\ref{assumption.restrictions}. Furthermore, $(c^\star:=\bar{p}/{\|T(\signal{p}^\star)\|}, \signal{p}^\star)$ is a common solution to problems~\refeq{problem.maxmin2} and \refeq{problem.maxmin3}, where $T$ is the mapping in \refeq{eq.mappingT} with the weights in \refeq{eq.weights}.
	\end{proposition}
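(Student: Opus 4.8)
The plan is to exploit the fact that the weights in \refeq{eq.weights} are chosen precisely so that $\signal{p}^\star$ becomes a fixed point of the normalized mapping $\tilde{T}$ in \refeq{eq.nmapping}; once this is established, the statement follows by chaining Proposition~\ref{proposition.maxmin} and Remark~\ref{remark.equivalence}.

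First I would check that the weights are well defined and strictly positive. Since $\signal{p}^\star\in\real_{++}^K$, positivity of the standard interference functions gives $f_k(\signal{p}^\star)>0$ for every $k\in\mathcal{K}$, hence $\omega_k=p_k^\star/f_k(\signal{p}^\star)>0$; moreover the identity $\omega_k=u_k(\signal{p}^\star)$ is immediate from \refeq{eq.util}, and $u_k(\signal{p}^\star)>0$ also follows from Lemma~\ref{lemma.properties_u}(ii). In particular $(\omega_1,\ldots,\omega_K)\in\real_{++}^K$, so $T$ in \refeq{eq.mappingT} is a genuine standard interference mapping and Proposition~\ref{proposition.maxmin}(ii) is applicable with this $T$ and $\bar{p}>0$.

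The key step is a one-line computation: for every $k\in\mathcal{K}$, the $k$th coordinate of $T(\signal{p}^\star)$ equals $\omega_k f_k(\signal{p}^\star)=\bigl(p_k^\star/f_k(\signal{p}^\star)\bigr)f_k(\signal{p}^\star)=p_k^\star$, i.e., $T(\signal{p}^\star)=\signal{p}^\star$. Consequently $\|T(\signal{p}^\star)\|=\|\signal{p}^\star\|=\bar{p}$, and therefore $\tilde{T}(\signal{p}^\star)=\bigl(\bar{p}/\|T(\signal{p}^\star)\|\bigr)T(\signal{p}^\star)=\signal{p}^\star$; that is, $\signal{p}^\star\in\mathrm{Fix}(\tilde{T})$ and $c^\star:=\bar{p}/\|T(\signal{p}^\star)\|$ satisfies $\signal{p}^\star=c^\star T(\signal{p}^\star)$.

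Finally I would close the argument as follows. By Proposition~\ref{proposition.maxmin}(ii) the mapping $\tilde{T}$ has a unique fixed point and problem~\refeq{problem.maxmin3} has the unique solution $(\bar{p}/\|T(\signal{p}^\star)\|,\signal{p}^\star)$; since we have exhibited our $\signal{p}^\star$ as a fixed point of $\tilde{T}$, it coincides with that unique fixed point, so $(c^\star,\signal{p}^\star)$ solves problem~\refeq{problem.maxmin3}. Proposition~\ref{proposition.maxmin}(iii) then upgrades this tuple to a solution of problem~\refeq{problem.maxmin2}, and Remark~\ref{remark.equivalence} yields that $\signal{p}^\star$ is a solution to problem~\refeq{problem.maxmin} with the weights \refeq{eq.weights}. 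I do not expect a genuine obstacle here: once the identity $T(\signal{p}^\star)=\signal{p}^\star$ is noticed the rest is bookkeeping, and the only point needing a little care is verifying that all hypotheses of Proposition~\ref{proposition.maxmin} (positive weights so that $T$ is a standard interference mapping, and $\bar{p}>0$) are indeed in force before invoking it.
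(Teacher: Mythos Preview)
Your proposal is correct and follows essentially the same approach as the paper: establish $T(\signal{p}^\star)=\signal{p}^\star$ from the choice of weights, conclude that $\signal{p}^\star$ is the fixed point of $\tilde{T}$ in \refeq{eq.nmapping}, and then invoke Proposition~\ref{proposition.maxmin}(ii)--(iii) (together with Remark~\ref{remark.equivalence}) to finish. Your additional verification that the weights are strictly positive is a welcome bit of extra care, but otherwise the argument is identical to the paper's.
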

	\begin{proof}
		From the definition of $T$, we verify that $\signal{p}^\star$ satisfies  $T(\signal{p}^\star)=\signal{p}^\star$, and thus we also have $\|T(\signal{p}^\star)\|=\|\signal{p}^\star\|=\bar{p}>0$. As a result, $\signal{p}^\star = T(\signal{p}^\star) = ({\bar{p}}/{\|T(\signal{p}^\star)\|}) T(\signal{p}^\star)$,
		which shows that $\signal{p}^\star$ is the fixed point of the mapping in \refeq{eq.nmapping}. The desired result now follows from Proposition~\ref{proposition.maxmin}(ii)-(iii). 
	\end{proof}

	One possible limitation of Proposition~\ref{proposition.weak_par} is that it takes into  account  only positive vectors $\signal{p}\in\real_{++}^K$. For nonnegative vectors, we could in principle consider a reduced network where we remove links $k\in\mathcal{K}$ such that  $p_k=0$. However,  nonnegative vectors should not be discarded  because they are useful to represent wireless systems where some users do not transmit because of scheduling decisions. In  Proposition~\ref{proposition.main_ch4} below, we not only address this limitation, but we also prove the converse of an implication of the first statement in Proposition~\ref{proposition.weak_par}.
	
	\begin{proposition}
		\label{proposition.main_ch4}
		Let $\|\cdot\|$ be an arbitrary monotone norm in $\real^K$, and fix the scalar $\bar{p}>0$ arbitrarily. Then the utilities  $(u_1(\signal{p}^\star),\ldots,u_K(\signal{p}^\star))$ corresponding to a vector $\signal{p}^\star\in C:=\{\signal{p}~\in\real_+^K|~\|\signal{p}\|\le\bar{p}\}$ are on the weak Pareto boundary under the constraint $C$ if and only if $\|\signal{p}^\star\|=\bar{p}$. 
	\end{proposition}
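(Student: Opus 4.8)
The statement is an equivalence, and since $\signal{p}^\star\in C$ already forces $\|\signal{p}^\star\|\le\bar{p}$, the plan is to prove separately \emph{(necessity)} that $\|\signal{p}^\star\|<\bar{p}$ implies the utilities $(u_1(\signal{p}^\star),\ldots,u_K(\signal{p}^\star))$ are \emph{not} on the weak Pareto boundary under $C$, and \emph{(sufficiency)} that $\|\signal{p}^\star\|=\bar{p}$ implies they are. In both parts the delicate point is that $\signal{p}^\star$ may have zero coordinates, which is precisely the case left open by Proposition~\ref{proposition.weak_par}; the necessity part will handle it with a perturbation, and the sufficiency part by restricting the problem to the active links.

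For necessity, assume $\|\signal{p}^\star\|<\bar{p}$; I would exhibit a feasible $\signal{p}'\in C$ with $u_k(\signal{p}')>u_k(\signal{p}^\star)$ for all $k\in\mathcal{K}$. If $\signal{p}^\star=\signal{0}$, the vector $(\bar{p}/\|\signal{1}_K\|)\signal{1}_K\in C$ already works by Lemma~\ref{lemma.properties_u}(ii). Otherwise I would set $\signal{q}_\delta:=\signal{p}^\star+\delta\signal{1}_K$ (strictly positive for $\delta>0$); by continuity of $\|\cdot\|$ one has $\|\signal{q}_\delta\|<\bar{p}$ for all small $\delta\ge0$, so $\signal{p}'_\delta:=(\bar{p}/\|\signal{q}_\delta\|)\signal{q}_\delta\in C$ with scaling factor $>1$ for small $\delta$. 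For each coordinate $k$ with $p^\star_k=0$ one has $u_k(\signal{p}'_\delta)>0=u_k(\signal{p}^\star)$ by Lemma~\ref{lemma.properties_u}(ii)--(iii); for each $k$ with $p^\star_k>0$, since $\signal{p}'_\delta\to(\bar{p}/\|\signal{p}^\star\|)\signal{p}^\star$ as $\delta\downarrow0$ and Lemma~\ref{lemma.properties_u}(iv) gives $u_k\big((\bar{p}/\|\signal{p}^\star\|)\signal{p}^\star\big)>u_k(\signal{p}^\star)$, continuity of $u_k$ (Lemma~\ref{lemma.properties_u}(i)) preserves the strict inequality for small $\delta>0$. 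As $\mathcal{K}$ is finite, a single small $\delta>0$ works simultaneously for all $k$, which contradicts weak Pareto efficiency of $\signal{p}^\star$.

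For sufficiency, assume $\|\signal{p}^\star\|=\bar{p}>0$, so the active set $\mathcal{A}:=\{k\in\mathcal{K}\mid p^\star_k>0\}$ is nonempty, and argue by contradiction: suppose $\signal{p}'\in C$ satisfies $u_k(\signal{p}')>u_k(\signal{p}^\star)$ for every $k$. Then $u_k(\signal{p}')>0$ for all $k$, hence $\signal{p}'\in\real_{++}^K$ by Lemma~\ref{lemma.properties_u}(iii). The key step is to transfer the problem to the subnetwork on $\mathcal{A}$: let $\iota:\real^{|\mathcal{A}|}\hookrightarrow\real^K$ be the inclusion filling the coordinates outside $\mathcal{A}$ with zeros, define the monotone norm $\|\signal{q}\|_{\mathcal{A}}:=\|\iota(\signal{q})\|$ on $\real^{|\mathcal{A}|}$ and, for $k\in\mathcal{A}$, the functions $g_k:=f_k\circ\iota$; one checks directly that each $g_k$ is a continuous standard interference function, so the reduced utilities $v_k:=u_k\circ\iota$ ($k\in\mathcal{A}$) satisfy Assumption~\ref{assumption.restrictions} on $\real_+^{|\mathcal{A}|}$. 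Writing $\signal{p}^\star=\iota(\signal{a}^\star)$ with $\signal{a}^\star\in\real_{++}^{|\mathcal{A}|}$ and $\|\signal{a}^\star\|_{\mathcal{A}}=\|\signal{p}^\star\|=\bar{p}$, Proposition~\ref{proposition.weak_par} applied to this reduced problem shows that $\signal{a}^\star$ solves the reduced weighted max--min problem with weights $v_k(\signal{a}^\star)>0$, whose optimal value is therefore $\min_{k\in\mathcal{A}}v_k(\signal{a}^\star)/v_k(\signal{a}^\star)=1$. Letting $\signal{a}'$ denote the restriction of $\signal{p}'$ to the coordinates in $\mathcal{A}$, I have $\iota(\signal{a}')\le\signal{p}'$ with equality on the coordinates of $\mathcal{A}$, so Lemma~\ref{lemma.properties_u}(v) gives $v_k(\signal{a}')=u_k(\iota(\signal{a}'))\ge u_k(\signal{p}')>u_k(\signal{p}^\star)=v_k(\signal{a}^\star)$ for all $k\in\mathcal{A}$, while monotonicity of $\|\cdot\|$ gives $\|\signal{a}'\|_{\mathcal{A}}=\|\iota(\signal{a}')\|\le\|\signal{p}'\|\le\bar{p}$. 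Hence $\signal{a}'$ is feasible for the reduced problem with objective strictly above $1$, contradicting optimality of $\signal{a}^\star$; thus no dominating $\signal{p}'$ exists and the utilities are on the weak Pareto boundary.

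The main obstacle, as already indicated, is the presence of inactive links ($p^\star_k=0$). A plain rescaling of $\signal{p}^\star$ can never turn on such a link, which is why necessity needs the two-parameter perturbation $\signal{p}^\star+\delta\signal{1}_K$ together with a limiting argument; and sufficiency needs the reduction to the active-link subnetwork, whose only nonroutine aspect is checking that the restricted norm $\|\cdot\|_{\mathcal{A}}$ is still a monotone norm and the restricted functions $g_k$ are still continuous standard interference functions — which is where monotonicity of $\|\cdot\|$ (to compare $\|\iota(\signal{a}')\|$ with $\|\signal{p}'\|$) and parts (iv)--(v) of Lemma~\ref{lemma.properties_u} do the real work.
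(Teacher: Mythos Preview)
Your proof is correct. The necessity argument is essentially the paper's own: perturb to a strictly positive vector, rescale to the boundary, and combine Lemma~\ref{lemma.properties_u}(iv) on the active coordinates with Lemma~\ref{lemma.properties_u}(ii)--(iii) on the inactive ones via continuity; the only cosmetic difference is that you take the concrete family $\signal{p}^\star+\delta\signal{1}_K$ whereas the paper uses a generic positive sequence converging to $\alpha\signal{p}^\star$.

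The sufficiency argument, however, takes a genuinely different route. The paper handles inactive coordinates by a \emph{limiting} argument in the ambient space: it approximates $\signal{p}^\star$ by strictly positive vectors $\signal{p}_n$ with $\|\signal{p}_n\|=\bar{p}$, applies Proposition~\ref{proposition.weak_par} to each $\signal{p}_n$ to obtain $\min_{k\in\mathcal{K}}(u_k(\signal{p}')-u_k(\signal{p}_n))\le 0$, and passes to the limit using continuity. You instead perform a \emph{structural reduction}: restrict the problem to the active-coordinate subnetwork, apply Proposition~\ref{proposition.weak_par} once there, and use Lemma~\ref{lemma.properties_u}(v) together with monotonicity of $\|\cdot\|$ to pull the hypothetical dominator $\signal{p}'$ down to a feasible dominator $\signal{a}'$ in the reduced problem. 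Your approach is somewhat heavier in bookkeeping (one must verify that $\|\cdot\|_{\mathcal{A}}$ is a monotone norm and each $g_k=f_k\circ\iota$ is a continuous standard interference function, though both are routine), but it has the advantage of making the ``reduced network'' interpretation explicit and of avoiding any limiting procedure; it also puts Lemma~\ref{lemma.properties_u}(v) to essential use, whereas the paper's sufficiency proof does not need it at all.
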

	\begin{proof}
		Let $\signal{p}^\star\in C$ be any vector satisfying $\|\signal{p}^\star\|=\bar{p}$.  For the sake of contradiction, suppose that the utilities $(u_1(\signal{p}^\star),\ldots,u_K(\signal{p}^\star))$ are not on the weak pareto Boundary under the constraint $C$. As a result there exists $\signal{p}^\prime\in C$ such that 
		\begin{align}
			\label{eq.contr}
			\min_{k\in\mathcal{K}} (u_k(\signal{p}^\prime)-u_k(\signal{p}^\star))>0.
		\end{align}
		Let   $(\signal{b}_n)_{n\in\Natural}$ be any sequence in $\real_{++}^K$  converging in norm to $\signal{p}^\star\in C$, and construct the normalized sequence $(\signal{p}_n:=(\bar{p}/\|\signal{b}_n\|)\signal{b}_n)_{n\in\Natural}$, which implies $(\forall n\in\Natural) \|\signal{p}_n\|=\bar{p}$ and $\signal{p}_n\in C\cap\real_{++}^N$. Proposition~\ref{proposition.weak_par} shows that, for every $n\in\Natural$, the positive vector $\signal{p}_n\in C$ solves a max-min problem of the type in \refeq{problem.maxmin}, so the utilities  $(u_1(\signal{p}_n),\ldots,u_K(\signal{p}_n))$ are on the weak pareto Boundary under the constraint $C$. Therefore, 
		\begin{align}
			\label{eq.le0}
			(\forall n\in\Natural)~\min_{k\in\mathcal{K}} (u_k(\signal{p}^\prime)-u_k(\signal{p}_n))\le 0.
		\end{align}
		Lemma~\ref{lemma.properties_u}(i) shows that the utility functions $(u_k)_{k\in\mathcal{K}}$ are continuous, so the function 
		\begin{align}
			\label{eq.haux}
			h:\real^K_+\to\real:\signal{x}\mapsto\min_{k\in\mathcal{K}}(u_k(\signal{p}^\prime)-u_k(\signal{x}))
		\end{align} is also continuous. As a result, we obtain the contradiction
		\begin{multline*}
			0\overset{\text{(a)}}{\ge} \lim_{n\to\infty} \min_{k\in\mathcal{K}} (u_k(\signal{p}^\prime)-u_k(\signal{p}_n)) \\ \overset{\text{(b)}}{=}  \min_{k\in\mathcal{K}} (u_k(\signal{p}^\prime)-u_k(\lim_{n\to\infty} \signal{p}_n)) \overset{\text{(c)}}{=}  \min_{k\in\mathcal{K}} (u_k(\signal{p}^\prime)-u_k(\signal{p}^\star)) \overset{\text{(d)}}{>} 0,
		\end{multline*}
		where (a) follows from \refeq{eq.le0}, (b) follows from  continuity of the function $h$ defined in \refeq{eq.haux}, (c) follows from the definition of the sequence $(\signal{p}_n)_{n\in\Natural}$, and (d) follows from \refeq{eq.contr}. This contradiction completes the proof that the utilities $(u_1(\signal{p}^\star),\ldots,u_K(\signal{p}^\star))$ are on the weak pareto Boundary under the constraint $C$ if $\signal{p}^\star\in C$ satisfies $\|\signal{p}^\star\|=\bar{p}$.
		
		Conversely, let $\signal{p}^\star\in C$ be a given vector such that the utilities $(u_k(\signal{p}^\star))_{k\in\mathcal{K}}$ are on the weak pareto Boundary under the constraint $C$. Suppose that $0\le \|\signal{p}^\star\| < \bar{p}$. If $\|\signal{p}^\star\|=0\Leftrightarrow\signal{p}^\star=\signal{0}$, then Lemma~\ref{lemma.properties_u}(iii) shows that any vector in $C\cap\real_{++}^K\neq\emptyset$ increases all utilities, so we contradict that the utilities corresponding to $\signal{p}^\star$ are on the weak Pareto boundary under $C$. We now only have to show a contradiction for $0<\|\signal{p}^\star\|<\bar{p}$. To this end, let $\alpha:=\bar{p}/\|\signal{p}^\star\|>1$ so that $\|\alpha \signal{p}^\star\|=\bar{p}$ and $\alpha\signal{p}^\star\in C$. Denote by $\mathcal{P}:=\{k\in\mathcal{K}~|~p^\star_k>0\}\neq\emptyset$ the set of coordinates of the vector $\signal{p}^\star$ with  positive elements, and by $\mathcal{Z}:=\mathcal{K}\backslash\mathcal{P}$ the possibly empty set of coordinates of the vector $\signal{p}^\star$ with elements taking the value zero.  From Lemma~\ref{lemma.properties_u}(iv) we deduce 
		\begin{align}
			\label{eq.positive}
			(\forall k\in\mathcal{P})~u_k(\signal{p}^\star)<u_k(\alpha\signal{p}^\star). 
		\end{align}
		
		Let $(\signal{x}_n)_{n\in \mathbb{N}}$ be a sequence in $\real_{++}^K$ converging in norm to $\alpha\signal{p}^\star$, and define $(\forall n\in\Natural)~ \signal{y}_n:=({\bar{p}}/{\|\signal{x}_n\|})\signal{x}_n$, which implies $(\forall n\in\Natural)~\|\signal{y}_n\| = \bar{p}$. As a result, recalling that  $\|\alpha\signal{p}^\star\|=\bar{p}>0$, we have  $\lim_{n\to\infty}\signal{y}_n=(\bar{p}/\|\alpha\signal{p}^\star\|)\alpha\signal{p}=\alpha\signal{p}^\star$  and  $(\forall n\in\Natural)~\signal{y}_n\in C\cap\real_{++}^N$. Continuity of the functions $(u_k)_{k\in\mathcal{K}}$ [Lemma~\ref{lemma.properties_u}(i)] yield $(\forall k\in\mathcal{K}) \lim_{n\to\infty}u_k(\signal{y}_n) = u_k(\alpha\signal{p}^\star)$, which, together with \refeq{eq.positive}, implies the existence of $N\in\Natural$ such that 
		\begin{align}
			\label{eq.ineqabca}
			(\forall k\in\mathcal{P})~u_k(\signal{p}^\star)<u_k(\signal{y}_N).
		\end{align}
		Recalling that $\signal{y}_N\in C\cap\real_{++}^N$ by construction, we obtain from Lemma~\ref{lemma.properties_u}(iii) that
		\begin{align}
			\label{eq.ineqabcb}
			(\forall k\in\mathcal{Z})~0=u_k(\signal{p}^\star)<u_k(\signal{y}_N). 
		\end{align}
		Combining \refeq{eq.ineqabca} and \refeq{eq.ineqabcb}, we deduce \linebreak[4] $(\forall k\in\mathcal{K}=\mathcal{P}\cup\mathcal{Z})~u_k(\signal{p}^\star)<u_k(\signal{y}_N),$ which contradicts that the utilities $(u_1(\signal{p}^\star),\ldots,u_K(\signal{p}^\star))$ are on the weak Pareto boundary under the constraint $C$, so we must have $\|\signal{p}^\star\|=\bar{p}$.
	\end{proof}

	The practical implication of Proposition~\ref{proposition.main_ch4} is that, to probe the weak Pareto boundary, we do not need to solve problem~\refeq{problem.maxmin} with different weights. It suffices to choose vectors with maximum norm and compute the corresponding utilities. The next corollary formalizes this statement.
		
	\begin{Cor}
		\label{cor.main}
		Given utility functions $(u_k)_{k\in\mathcal{K}}$ satisfying Assumption~\ref{assumption.restrictions}, denote by $\mathcal{B}\subset \real_+^K$ the set of utilities on the weak Pareto boundary under the constraint $C:=\{\signal{p}\in\real_+^K~|~\|\signal{p}\|\le\bar{p}\}$, where $\|\cdot\|$ is a given monotone norm and $\bar{p}>0$ a given power budget. Then 
		\begin{align*}
			\mathcal{B}=\{(u_1(\signal{p}),\ldots,u_K(\signal{p}))\in\real_+^K~|~\signal{p}\in\real_+^K\text{ and }\|\signal{p}\|=\bar{p}\}.
		\end{align*}
	\end{Cor}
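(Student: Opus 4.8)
The plan is to derive the corollary directly from Proposition~\ref{proposition.main_ch4}, which already supplies the key equivalence: for $\signal{p}^\star\in C$, the utilities $(u_1(\signal{p}^\star),\ldots,u_K(\signal{p}^\star))$ lie on the weak Pareto boundary under $C$ if and only if $\|\signal{p}^\star\|=\bar{p}$. The set $\mathcal{B}$ is, by definition, the collection of all utility vectors $(u_1(\signal{p}^\star),\ldots,u_K(\signal{p}^\star))$ arising from vectors $\signal{p}^\star\in C$ whose utilities are on the weak Pareto boundary, so the proof reduces to translating this equivalence into the asserted set identity by proving the two inclusions.

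For the inclusion $\mathcal{B}\subseteq\{(u_1(\signal{p}),\ldots,u_K(\signal{p}))\in\real_+^K~|~\signal{p}\in\real_+^K,~\|\signal{p}\|=\bar{p}\}$, I would take an arbitrary element of $\mathcal{B}$, write it as $(u_1(\signal{p}^\star),\ldots,u_K(\signal{p}^\star))$ for some $\signal{p}^\star\in C$ whose utilities are on the weak Pareto boundary, and invoke the ``only if'' direction of Proposition~\ref{proposition.main_ch4} to conclude $\|\signal{p}^\star\|=\bar{p}$; since $\signal{p}^\star\in\real_+^K$, this element then belongs to the right-hand side. For the reverse inclusion, I would start from an arbitrary $\signal{p}\in\real_+^K$ with $\|\signal{p}\|=\bar{p}$, note that $\|\signal{p}\|\le\bar{p}$ implies $\signal{p}\in C$, and apply the ``if'' direction of Proposition~\ref{proposition.main_ch4} to conclude that $(u_1(\signal{p}),\ldots,u_K(\signal{p}))$ is on the weak Pareto boundary under $C$, hence lies in $\mathcal{B}$. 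Combining the two inclusions yields the claimed identity.

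There is essentially no substantive obstacle here, since the corollary is a bookkeeping restatement of Proposition~\ref{proposition.main_ch4}. The only point requiring a modicum of care is that $\mathcal{B}$ is a set of \emph{utility} vectors rather than of power vectors, so each inclusion must be phrased in terms of the images of the relevant power vectors under $(u_k)_{k\in\mathcal{K}}$ and in terms of Definition~\ref{def.def_paretto}; once this is kept in mind, both directions are immediate.
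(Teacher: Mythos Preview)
Your proposal is correct and matches the paper's approach: the corollary is stated immediately after Proposition~\ref{proposition.main_ch4} without a separate proof, precisely because it is the direct set-theoretic restatement of that proposition's equivalence. Your two-inclusion argument, together with the observation that $\mathcal{B}$ consists of utility vectors rather than power vectors, is exactly the bookkeeping the paper leaves implicit.
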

	
	\section{Implications to cell-less systems}
	\label{sect.cell_less}
	\subsection{System model}
	This sections illustrates an application of the theoretical findings to a nontrivial resource allocation problem in modern cell-less networks, which cover cellular networks as a particular case. In particular, we study the uplink performance of a cell-less network as defined in \cite{demir2021}, composed of $L$ access-points (APs) indexed by $\mathcal{L}:=\{1,\ldots,L\}$, each of them equipped with $N$ antennas, and $K$ single-antenna users indexed by $\mathcal{K}:=\{1,\ldots,K\}$.
	We consider ergodic achievable rates measured via the popular \textit{use-and-then-forget} (UatF) lower bound \cite{demir2021}. More precisely, the achievable rate of user $k\in\set{K}$ is given by		

	\begin{equation*}
		R_k^{\mathrm{UL}}(\rvec{v}_k,\vec{p}) \eqdef \log_2(1+\mathrm{SINR}_k^{\mathrm{UL}}(\rvec{v}_k,\vec{p})),
	\end{equation*}
where
	\begin{equation*}
		\mathrm{SINR}_k(\rvec{v}_k,\vec{p}) \eqdef \resizebox{0.69\linewidth}{!}{$\dfrac{p_k|\E[\rvec{h}_k^\herm\rvec{v}_k]|^2}{p_k\Var(\rvec{h}_k^\herm\rvec{v}_k)+\underset{j\neq k}{\sum} p_j\E[|\rvec{h}_j^\herm\rvec{v}_k|^2]+\E[\|\rvec{v}_k\|_2^2]}$},
	\end{equation*}
	$\rvec{h}_{k}$ is a random vector taking values in $\stdset{C}^{NL}$, which models the fading channel between all APs and user~$k$; $\rvec{v}_k$ is a random vector taking values in $\stdset{C}^{NL}$, which models a distributed combining vector applied by all APs to jointly process the signal of user~$k$; $\vec{p} \in \real_+^K$ is a deterministic vector, which models the power allocated to the users; and $\E[\cdot]$ and $\Var(\cdot)$ denote, respectively, the expectation and the variance of random variables. 
	
	Similarly to many studies on cell-less networks, we focus on an instance of the weighted max-min fair utility optimization problem in \eqref{problem.maxmin} with the utility for user $k$ given by
	\begin{equation}\label{eq:maxSINR}
		(\forall\vec{p}\in\stdset{R}_+^K)~u_k(\vec{p}) \eqdef \sup_{\substack{\rvec{v}_k\in\set{V}_k\\ \E[\|\rvec{v}_k\|_2^2] \neq 0}}\mathrm{SINR}_k(\rvec{v}_k,\vec{p}),
	\end{equation}
	where $\set{V}_k$ denotes a constraint modeling practical impairments on the feasible combiners, such as those related to imperfect CSI owing to measurement noise and pilot contamination, or to the enforcement of scalable cell-less regimes limiting the information sharing across the APs. Popular examples are the \textit{distributed} or \textit{centralized} \textit{user-centric} clustered models reviewed in \cite{demir2021}. Many relevant combining techniques are covered by \eqref{eq:maxSINR}, such as those involving the optimization of large-scale fading decoding coefficients given a fixed combining structure \cite{demir2021,bashar2019uplink}, or the optimization of the combining structure itself given the available CSI \cite{miretti2021team}. To avoid technical digressions, we refer the reader to \cite{miretti2021team,miretti22globecom} for details on these points. 
	
	\begin{figure}
		\centering
		\includegraphics[width=0.7\linewidth]{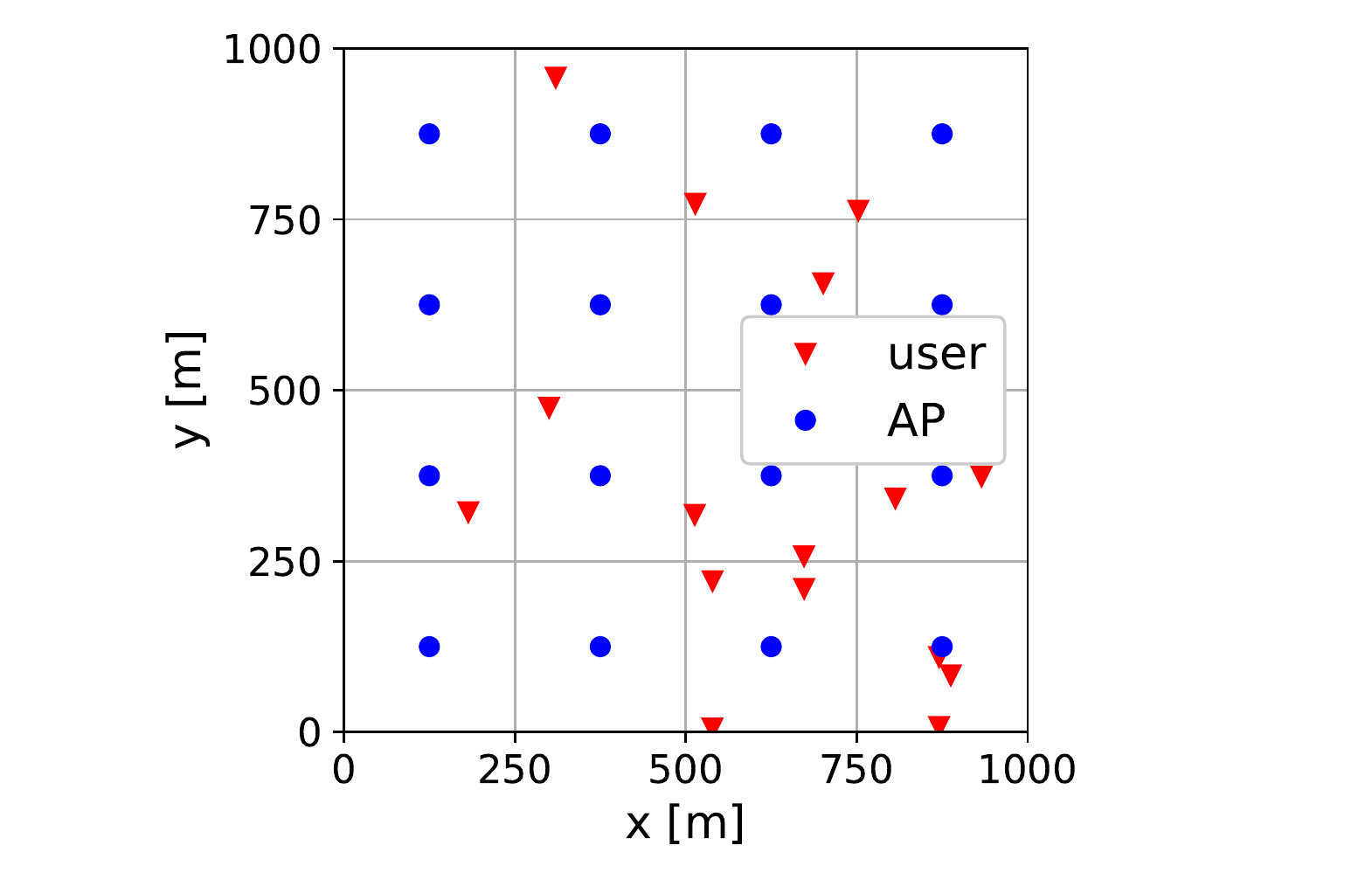}
		\caption{Pictorial representation of the simulated setup: $K=16$ users distributed within a squared service area of size $1\times 1~\text{km}^2$, and $L=16$ regularly spaced APs with $N=2$ antennas each. Each user is jointly served by a cluster of $4$ APs offering the strongest channel gains.}
		\label{fig:network}
	\end{figure}
	\subsection{Parametrizing the boundary of the achievable rate region}
	As recalled in Proposition~\ref{proposition.weak_par}, it is well-known that solving problem~\eqref{problem.maxmin} with the utilities in \eqref{eq:maxSINR} produces efficient SINR tuples; i.e., utilities on the weak Pareto boundary $\set{B}$. Following standard information theoretical terminology, a logarithmic transformation of the type $u \mapsto \log_2(1+u)$ of the set of efficient utilities readily gives the so-called boundary of the achievable rate region; i.e., the set of simultaneously achievable rate tuples under the given power and combining constraints. Network designers can tune the problem weights $(\omega_1,\ldots,\omega_K)$  to select rate tuples on the boundary of the achievable rate region according to their needs \cite{emil2014multiobjective,miretti22globecom}. However, at the price of  having limited prior control over the notion of fairness, Corollary~\ref{cor.main} shows that a simpler alternative for obtaining rate tuples that are efficient in the sense of Definition~\ref{def.def_paretto} is to choose any power allocation  $\vec{p}=(p_1,\ldots,p_K)$ such that $\|\vec{p}\|=\bar{p}$. To apply  this result formally, we only need to prove that the utilities in \eqref{eq:maxSINR} satisfy Assumption~\ref{assumption.restrictions}, which is done below by adding a minor condition that is always satisfied in nontrivial settings.  
	\begin{lemma} 
		Assume that $(\forall k~\in \set{K})~\set{V}'_k \eqdef \{\rvec{v}_k \in \set{V}_k~|~\E[\rvec{h}_k^\herm \rvec{v}_k]\neq 0\} \neq \emptyset$.
		Then, the utilities in \eqref{eq:maxSINR} satisfy Assumption~\ref{assumption.restrictions} with a positive concave function (hence a standard interference function) given by $(\forall k \in \set{K})(\forall\vec{p}\in\stdset{R}_+^K)$
		\begin{equation*} 
			\resizebox{\hsize}{!}{$f_k(\vec{p}) \eqdef \inf_{\rvec{v}_k \in \set{V}'_k} \dfrac{p_k\Var(\rvec{h}_k^\herm\rvec{v}_k)+\sum_{j\neq k}p_j\E[|\rvec{h}_j^\herm\rvec{v}_k|^2]+\E[\|\rvec{v}_k\|_2^2]}{|\E[\rvec{h}_k^\herm\rvec{v}_k]|^2}.$}
		\end{equation*} 
	\end{lemma}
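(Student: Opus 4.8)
The plan is to verify that the utilities in \eqref{eq:maxSINR} admit the representation required by Assumption~\ref{assumption.restrictions}; concretely, I would show (i) the algebraic identity $u_k(\vec{p})=p_k/f_k(\vec{p})$ for every $\vec{p}\in\real_+^K$, and (ii) that the proposed $f_k$ is a well-defined continuous standard interference function. Throughout, fix $k\in\set{K}$ and, for $\rvec{v}_k\in\set{V}'_k$, abbreviate the quotient in the displayed formula by $r_k(\rvec{v}_k,\vec{p}):=\bigl(p_k\Var(\rvec{h}_k^\herm\rvec{v}_k)+\sum_{j\neq k}p_j\E[|\rvec{h}_j^\herm\rvec{v}_k|^2]+\E[\|\rvec{v}_k\|_2^2]\bigr)/|\E[\rvec{h}_k^\herm\rvec{v}_k]|^2$, which is well defined because $\rvec{v}_k\in\set{V}'_k$ makes the denominator strictly positive. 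Assuming, as is standard \cite{demir2021}, that the channel and combiner statistics have finite second moments, Cauchy--Schwarz gives $|\E[\rvec{h}_k^\herm\rvec{v}_k]|^2\le\E[\|\rvec{h}_k\|_2^2]\,\E[\|\rvec{v}_k\|_2^2]$; since $\rvec{v}_k\in\set{V}'_k$ forces $\E[\|\rvec{v}_k\|_2^2]>0$ and $\E[\|\rvec{h}_k\|_2^2]>0$ (otherwise $\set{V}'_k$ would be empty), bounding the numerator of $r_k$ from below by $\E[\|\rvec{v}_k\|_2^2]$ yields the uniform estimate $r_k(\rvec{v}_k,\vec{p})\ge 1/\E[\|\rvec{h}_k\|_2^2]>0$ over all $\vec{p}\in\real_+^K$ and all $\rvec{v}_k\in\set{V}'_k$. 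Hence $f_k(\vec{p})\ge 1/\E[\|\rvec{h}_k\|_2^2]>0$, while evaluating $r_k$ at any fixed element of the nonempty set $\set{V}'_k$ shows $f_k(\vec{p})<\infty$; thus $f_k$ maps $\real_+^K$ into $\real_{++}$.

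For the identity, I would first discard the combiners that are irrelevant to the supremum in \eqref{eq:maxSINR}: any $\rvec{v}_k\in\set{V}_k$ with $\E[\|\rvec{v}_k\|_2^2]\neq 0$ and $\E[\rvec{h}_k^\herm\rvec{v}_k]=0$ has vanishing $\mathrm{SINR}_k(\rvec{v}_k,\vec{p})$ (zero numerator, strictly positive denominator), and since all SINRs are nonnegative such combiners do not affect the supremum, so $u_k(\vec{p})=\sup_{\rvec{v}_k\in\set{V}'_k}\mathrm{SINR}_k(\rvec{v}_k,\vec{p})$. On $\set{V}'_k$ we have $\mathrm{SINR}_k(\rvec{v}_k,\vec{p})=p_k/r_k(\rvec{v}_k,\vec{p})$, so, since $r_k$ is strictly positive and bounded away from $0$, $u_k(\vec{p})=p_k\sup_{\rvec{v}_k\in\set{V}'_k}1/r_k(\rvec{v}_k,\vec{p})=p_k/\inf_{\rvec{v}_k\in\set{V}'_k}r_k(\rvec{v}_k,\vec{p})=p_k/f_k(\vec{p})$, the exchange of supremum with infimum being legitimate precisely because $r_k$ is positive and the infimum is finite; the boundary case $p_k=0$ is immediate from $f_k(\vec{p})\in\real_{++}$. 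This is exactly the form \eqref{eq.util}.

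It remains to identify $f_k$ as a continuous standard interference function. For fixed $\rvec{v}_k\in\set{V}'_k$ the denominator $|\E[\rvec{h}_k^\herm\rvec{v}_k]|^2$ is a positive constant not depending on $\vec{p}$, so $\vec{p}\mapsto r_k(\rvec{v}_k,\vec{p})$ is affine on $\real_+^K$; therefore $f_k=\inf_{\rvec{v}_k\in\set{V}'_k}r_k(\rvec{v}_k,\cdot)$ is concave on $\real_+^K$ (and upper semicontinuous, being an infimum of continuous functions). A finite concave function on the polyhedron $\real_+^K$ is continuous there, and by \cite[Proposition~1]{renato2016} every positive concave function on $\real_+^K$ is a standard interference function; together with the identity above this establishes Assumption~\ref{assumption.restrictions} with the stated $f_k$. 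I expect the only two delicate points to be the exchange of supremum and infimum---which rests on $r_k$ being strictly positive and uniformly bounded away from zero, itself a consequence of the minor hypothesis $\set{V}'_k\neq\emptyset$ and not of any structural assumption on $\set{V}_k$---and the passage from concavity to continuity up to the boundary of $\real_+^K$.
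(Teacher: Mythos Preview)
Your argument is correct and essentially follows the natural route: write the SINR as $p_k/r_k$, observe that combiners outside $\set{V}'_k$ contribute zero SINR and can be discarded, take the infimum to obtain $f_k$ as a pointwise infimum of affine functions of $\vec{p}$, and conclude positivity and concavity (hence continuity on the polyhedral domain $\real_+^K$ and the standard interference property via \cite[Proposition~1]{renato2016}). The uniform lower bound $f_k(\vec{p})\ge 1/\E[\|\rvec{h}_k\|_2^2]$ via Cauchy--Schwarz is a clean way to secure strict positivity and the legitimacy of the $\sup$/$\inf$ exchange.

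The only difference from the paper is that the paper does not give a proof at all but defers entirely to \cite[Proposition~3(ii)]{miretti22globecom}; your write-up supplies precisely the self-contained argument that the external reference would provide, so the two are aligned in substance.
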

	\begin{proof}
		See Proposition 3(ii) in \cite{miretti22globecom}.
	\end{proof}
	In the following numerical example we consider a per user power constraint; i.e., we focus on the $l_{\infty}$ norm constraint. This constraint is typical  for uplink operation. However, we remark that our result readily covers, for instance, a sum power constraint ($l_1$ norm constraint). If we invoke known uplink-downlink duality arguments \cite{demir2021}, this last special case can be used to study the downlink performance of the considered cell-less  network under a network-wide sum power constraint.\footnote{With an informal proof, this last special case has been recently reported  in \cite[Appendix~A]{miretti2021team} as a side motivation of the main results therein. The present studies provides a more rigorous and general proof.  
	}
	
	\subsection{Numerical example and final remarks}
	\begin{figure}
		\centering
		\includegraphics[width=\linewidth]{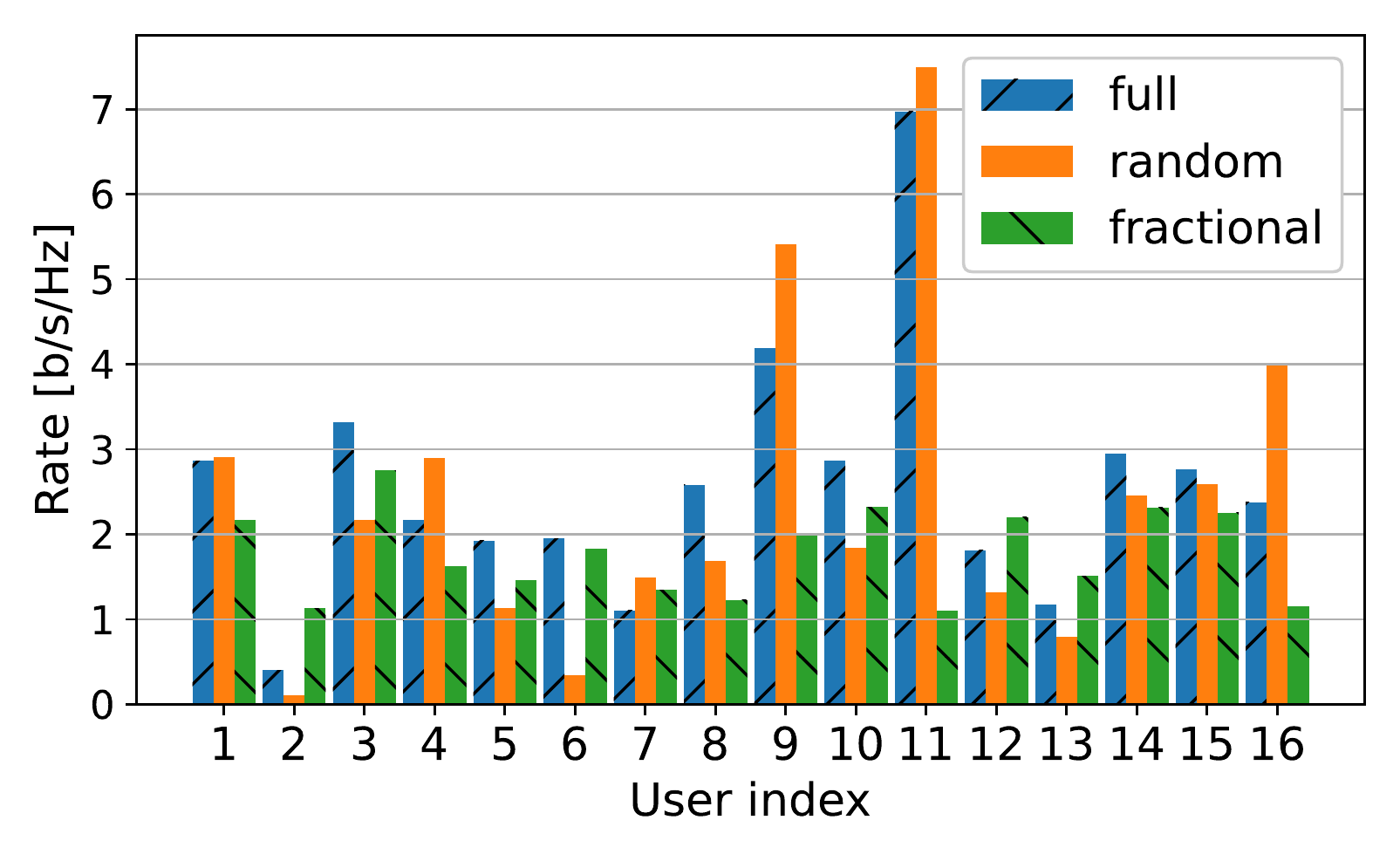}
		\caption{Ergodic achievable rates of all users under three different power allocation policies satisfying the per-user power constraint with equality. As predicted by Corollary~\ref{cor.main}, all chosen policies are Pareto efficient, i.e., they all produce rate tuples on the boundary of the achievable rate region.}
		\label{fig:hist}
	\end{figure}
	We simulate the cell-less network depicted in Figure~\ref{fig:network}. Except for the chosen values of $K$ and $N$, we use the same simulation parameters and channel model as in \cite[Sect.~IV]{miretti22globecom}. In particular, we consider a typical distributed user-centric clustered model as defined in \cite{demir2021}, where each user is jointly served by a cluster of four APs offering the strongest channel gain, and where each AP must form its combiners based on only \textit{local} CSI; i.e., without sharing any CSI along the fronthaul. The optimal combiners solving \eqref{eq:maxSINR} are obtained using the recently proposed \textit{team} theoretical technique discussed in \cite{miretti2021team}.  We omit further details owing to the space limitation.
	
	Figure~\ref{fig:hist} shows three achievable rate tuples obtained by choosing $\vec{p}$ according to the following power allocation policies: (a) full power transmission; (b) power transmission selected uniformly at random within a box $\{ \signal{p}\in\real^K~|~\|\signal{p}\|_\infty \le \bar{p}\}$; and (c) fractional power transmission with exponent $-1$ \cite[Eq.~(7.34)]{demir2021}. In all cases, the power allocation vector is later normalized such that a per user power constraint $\|\vec{p}\|_{\infty} \leq \bar{p} = 20$~dBm is satisfied with equality. 
	
	The numerical results are in agreement with Corollary~\ref{cor.main}, which states that all chosen policies are Pareto efficient. Indeed, we observe that no policy is able to offer strictly better rates for all users with respect to another policy. To the best of our knowledge, this result has not been reported  in the context of modern cell-less resource allocation problems involving the optimization of distributed combiners. The difference between the three policies lies in the notion of fairness. For instance, we observe that policy (a) offers very high rates to users with good channel conditions, but at the expense of weak users. In contrast, policy (c) approximates an egalitarian max-min fair policy that maximizes the rate of the weakest user.

	Finally, we emphasize that the results in Sect.~\ref{sect.wpar} cover not only the cell-less resource allocation problems described in this section, but also  many others, including those in \cite[Sect.~4.1.4]{brehmer2012utility}, \cite[Sect. 3.2.3]{bjornson2013resource} as particular cases.
	
	
	\bibliographystyle{IEEEtran}
	\bibliography{IEEEabrv,references}

\end{document}